\newtheorem{observation}{Observation}
\renewcommand{\leq}{\leqslant}
\renewcommand{\geq}{\geqslant}
\newcommand{\COMMENT}[1]{}
\title{Quantum Circuits for\\Fixed Substring Matching Problems\thanks{This work is funded by by ICSC – Centro Nazionale di Ricerca in High-Performance Computing, Big Data and Quantum Computing.}}
\author{Domenico Cantone$^{\dag}$, Simone Faro$^{\dag}$, Arianna Pavone$^{\ddag}$ and Caterina Viola$^{\dag}$}
\institute{$^{\dag}$Department of Mathematics and Computer Science, University of Catania\\
\email{\{domenico.cantone,simone.faro,caterina.viola\}@unict.it}\\[0.1cm]
$^{\ddag}$Department of Mathematics and Computer Science, University of Palermo\\
\email{ariannamaria.pavone@unipa.it}
}
\begin{document}


\maketitle

\newcommand{\pBlock}[1]{p\llbracket #1 \rrbracket}

\begin{abstract}
Quantum computation represents a computational paradigm whose distinctive attributes confer the ability to devise algorithms with asymptotic performance levels significantly superior to those achievable via classical computation. Recent strides have been taken to apply this computational framework in tackling and resolving various issues related to text processing. The resultant solutions demonstrate marked advantages over their classical counterparts. This study employs quantum computation to efficaciously surmount text processing challenges, particularly those involving string comparison. The focus is on the alignment of fixed-length substrings within two input strings. 
Specifically, given two input strings, $x$ and $y$, both of length $n$, and a value $d\leq n$, we want to verify the following conditions: the existence of a common prefix of length $d$, the presence of a common substring of length $d$ beginning at position $j$ (with $0\leq j<n$) and, the presence of any common substring of length $d$ beginning in both strings at the same position. Such problems find applications as sub-procedures in a variety of problems concerning text processing and sequence analysis.
Notably, our approach furnishes polylogarithmic solutions, a stark contrast to the linear complexity inherent in the best classical alternatives.
\end{abstract}

\section{Introduction}


Quantum computing stands as a swiftly evolving domain within the realm of computer science, harnessing the tenets of quantum mechanics to forge computing systems of heightened potency, diverging significantly from the traditional modus operandi of classical computers.

In stark contrast to classical counterparts, which hew to the binary realm of $0$s and $1$s for data processing, quantum computing capitalizes on qubits, entities capable of simultaneous existence in multiple states. Furthermore, the enigmatic phenomenon of quantum entanglement comes into play, permitting the orchestration of parallel operations by two or more qubits. This synergy of qubits facilitates swifter, more efficient operations than their classical bit counterparts, unfurling a distinctive edge for quantum computers. This advantage is particularly conspicuous in arenas such as cryptographic breaking and optimization, where quantum prowess delivers extraordinary computational velocities.

The impact of quantum computing reverberates through the algorithmic landscape, epitomized by landmark strides like Shor's algorithm~\cite{Shor1997} for decomposing large numbers and Grover's algorithm~\cite{Grover96} for unordered searches. These algorithms confer exponential and quadratic accelerations over classical counterparts, respectively, standing as compelling testament to the prowess of quantum computing. Their emergence not only underscores the immense potential of this field but also kindles an effusion of fervor for amplified exploration and advancement.
Nonetheless, it is the recent substantiation of quantum supremacy that has galvanized an upsurge of fascination around quantum computing, catalyzing the infusion of these novel technologies across diverse domains within the realm of computer science.

Only in recent times, the realm of quantum computation has turned its gaze towards the domain of text processing and string matching, as evidenced by seminal works such as \cite{RAMESH2003103,Niroula21,CF23,0022403}.
In the realm of string matching problems, a pioneering quantum algorithm emerged in the form of Ramesh and Vinay's $\tilde{\mathcal{O}}(\sqrt{n} + \sqrt{m})$ time algorithm, as outlined in~\cite{RAMESH2003103}, facilitating the identification of the initial exact occurrence of a pattern with a length of $m$ within a text of length $n$. Later, Niroula and Nam introduced an algorithm \cite{Niroula21} for exact string matching, boasting a time complexity of $\tilde{\mathcal{O}}(\sqrt{n})$. Further advancements  \cite{CF23} have since extended the Niroula-Nam quantum algorithm to address swap matching in $\tilde{\mathcal{O}}(\sqrt{n2^{m}})$ time, thus culminating in a time complexity of $\tilde{\mathcal{O}}(n)$ in cases where the pattern length is short, i.e., when $m=\mathcal{O}(\log(n))$.
There have been other quantum solutions to string problems, mostly focused on the edit distance~\cite{0022403}. 

In this paper we are concerned with solving some combinatorial problems related to string comparison. Specifically, given two input strings, $x$ and $y$, whose characters are drawn from an alphabet $\Sigma$ of size $\sigma$, both of length $n$, and a parameter $d$, with $0< d \leq n$, we tackle the following problems: 
\begin{enumerate}
\item the Fixed Prefix Matching (FPM) problem, which checks for the presence of a common prefix between $x$ and $y$ of length $d$, i.e., checks whether 
$$x[0 .. d-1] = y[0 .. d-1]$$
\item the Fixed Factor Matching (FFM) problem, which tests, for a given input parameter $j$, with $0\leq j < n-d$, for the existence of a common factor between $x$ and $y$, beginning at position $j$ of both strings, i.e. it tests whether 
$$x[j .. j+d-1] = y[j .. j+d-1]$$ 
\item the Shared Fixed Substring Checking (SFSC) problem, which tests for the existence of a common factor between $x$ and $y$ with beginning at the same position in both strings, i.e. it tests whether 
$$\exists\ j\ :\ 0\leq j <n-d \textrm{ and } x[j .. j+d-1] = y[j .. j+d-1]$$
\end{enumerate}

We call this family of problems by \emph{fixed substring matching} (FSM) problems, since, given the parameter $d\leq n$, we are interested in matching substrings of \emph{fixed} length $d$. Although it is possible to identify other problems belonging to this family, they can all be traced back to one of the three problems listed above.
Note also that the FPM problem can be solved by any solution for the FFM problem by posing $j=0$.

These are three basic problems in many applications of text processing, computational biology, and sequence analysis in general. Solutions to these problems can be used as building-blocks for solving more complex text processing problems such as longest common substring~\cite{cormen01introduction,GallS23}, longest palindromic substring, sequence alignments or string-distance computation problems, just to list a few. Their resolution is therefore of fundamental importance in such areas.

Using the classical computation paradigm, these three problems can be solved in linear time, including through the use of simple data structures built on the two input strings, such as generalized suffix trees~\cite{323593}. In this paper, we propose solutions based on quantum computation that allow us to obtain, in all three cases, a solution in polylogarithmic time. Specifically, we propose an approach based on the circuit-based computational model that is well suited for solving all three problems addressed, and that allows obtaining a solution in $\mathcal{O}(\log^3(n))$ computational time when processing binary strings, while obtaining a complexity equal to $\mathcal{O}(\log^4(n))$ in the general case.

The paper is organized as follows.  In Section \ref{sec:preliminaries}, we review some useful preliminaries.
in Section \ref{sec:general} we present a general procedural solution for solving our problems. Finally in Section \ref{sec:actual_circuits} we give an actual implementation by quantum circuits of the proposed algorithms by providing their structural details and a complexity analysis.

\section{Preliminaries}\label{sec:preliminaries}

Given a string $x$, of length $n \geq 1$, over a finite alphabet $\Sigma$ of size $\sigma$, we represent it as a finite array $x[0\,..\,n-1]$. 
The empty string is denoted by $\varepsilon$.
We denote by $x[i]$ the $(i+1)$-st character of $x$, for $0\leq i< n$, and by $x[i\,..\,j]$ the substring of $x$ contained between the $(i+1)$-st and the $(j+1)$-st characters of $x$, for $0\leq i \leq j <n$.  
A $k$-substring of a string $x$ is a substring of $x$ of length $k$. 
For ease of notation, the $(i+1)$-st character of the string $x$ will also be denoted by the symbol $x_i$, so that $x = x_0 x_1 \ldots x_{n-1}$.   
A substring of $x$ beginning at position $0$ is a \emph{prefix} of $x$.  We use the symbol $x_{:i}$ to indicate the prefix of $x$ of length $i$.

For any two strings $x$ and $y$ of length $n$, we say that they have a common $k$-substring  if two values $i$ and $j$ exist, such that $0\leq i,j<n-k$ and $x[i\ldots i+k-1] = y[j\ldots j+k-1]$. For our convenience, we instead say that $x$ and $y$ \emph{share} a $k$-substring if a position $i$ exists such that $x[i\ldots i+k-1] = y[i\ldots i+k-1]$. We write $x\cdot y$ to denote the concatenation of $x$ and $y$. Furthermore, given a string $x$ of length $n$ and a shift value $0\leq j <n$, we denote by the symbol $\vv{x}^j$ the cyclic rightward rotation of the characters of $x$ by $j$ positions. More formally, we have $\vv{x}^j = x[n-j\ldots n-1]\cdot x[0\ldots n-j-1]$.

\subsection{Essential Basics of Quantum Computation}\label{sec:basics}
The fundamental unit in quantum computation is the \emph{qubit}.
A qubit is a coherent superposition of the two orthonormal basis states, which are denoted by $|0\rangle$ and $|1\rangle$, using the conventional \emph{bra–ket} notation. The mathematical formulation for a qubit $|\psi\rangle$ is then as a linear combination of the two basis states, i.e., $|\psi\rangle = \alpha |0\rangle + \beta |1 \rangle$, where the values $\alpha$ and $\beta$, called \emph{amplitudes}, are complex numbers such that $|\alpha|^2 + |\beta|^2 = 1$, representing the probability of finding the qubit in either the state $|0\rangle$ or $|1\rangle$, respectively, when measured. A quantum measurement is the only operation through which information can be gained about the state of a qubit, however causing the qubit to collapse to one of the two basis states. The measurement of the state of a qubit is irreversible, meaning that it irreversibly alters the magnitudes of $\alpha$ and $\beta$. If $b$ is a binary value, equal to $0$ or $1$, we use the symbol $|b\rangle$ to indicate the qubit in the corresponding basis state, $|0\rangle$ or $|1\rangle$, respectively.
Multiple qubits taken together are referred to as \emph{quantum registers}. A quantum register $|\psi\rangle = |q_0, q_1, .., q_{n-1}\rangle$ of $n$ qubits is the tensor product of the constituent qubits, i.e., $|\psi\rangle = \bigotimes_{i=0}^{n-1}|q_i\rangle$.
If $k$ is an integer value that can be represented by a binary string of length $n$, we use the symbol $|k\rangle$ to denote the register of $n$ qubits such that $|k\rangle = \bigotimes_{i=0}^{n-1}|k_i\rangle$, where $|k_i\rangle$ takes the value of the $i$-th least significant binary digit of $k$. For example, the quantum register $|9\rangle$ with $4$ qubits is given by $|9\rangle = |1\rangle \otimes |0\rangle \otimes |0\rangle \otimes |1\rangle$.
The mathematical formulation of a a quantum register is then $|\psi\rangle = \sum_{k=0}^{2^n-1} \alpha_k |k\rangle$, where the values $\alpha_k$ represent the probability of finding the register in the state $|k\rangle$  when measured, with $\sum_{k=0}^{2^n-1}|\alpha_k|^2 = 1$.

\emph{Operators} in quantum computing are mathematical entities used to represent functional processes that result in the change of the state of a quantum register. 
Although there is no problem in realizing any quantum operator capable of working in constant time on a quantum register of fixed size, operators of variable size can only be implemented through the composition of elementary gates.

Given a function $f\colon \{0,1\}^n \rightarrow \{0,1\}$, any quantum operator that  maps a register containing the value of a given input $x \in \{0,1\}^n$ into a register whose value depends on $f(x)$ is called a \emph{quantum oracle}.  
A \emph{Boolean oracle} $U_f$ maps a register $|\psi\rangle$ of size $n+1$ initialized to $|x\rangle \oplus |0\rangle$, to the register $|x\rangle \oplus |f(x)\rangle$. More formally,  $U_f |x,0\rangle = |x,f(x)\rangle$.
A \emph{phase oracle} $P_f$ for a function $f\colon \{0,1\}^n \rightarrow \{0,1\}$ takes as input a quantum register $x\in \{0,1\}^n$ and leaves its value unchanged, applying to it a negative global phase only in the case in which $f(x)=1$, that is, only if $x$ is a solution for the function. More formally,  $P_f |x\rangle = (-1)^{f(x)}|x\rangle$.

A quantum algorithm can be shaped using mainly two different models, the query based model and the circuit computational model.

The \emph{quantum query model} takes a macroscopic perspective in quantum computing. It's concerned with quantifying the minimum number of times an algorithm needs to interact with the input to solve a problem efficiently. This model abstracts the intricate details of quantum gates and circuits and centers on the notion of query complexity. It's particularly useful for problems where classical solutions involve many queries and quantum algorithms promise significant improvements in efficiency.

In contrast, the \emph{quantum circuit computational model} offers a micro-level view of quantum computing. It entails representing quantum algorithms as sequences of quantum gates that manipulate qubits. This approach delves into the specifics of how quantum operations are executed and provides a structured methodology for designing and optimizing quantum algorithms. Quantum circuit diagrams visually illustrate the flow of gate operations and the connections between qubits.

In this paper we adopt the circuit model of computation. We assume the circuit as being divided into a sequence of discrete time-steps, where the application of a single gate requires a single time-step. The \emph{depth} of the circuit is the total number of required time-steps and it is the time complexity measure that we adopt in this paper. We remark that the depth of a circuit is not necessarily the same as the total number of gates (called \emph{size}) of the circuit itself, since gates that act on disjoint qubits can often be applied in parallel.

We use the notations $\tilde{\mathcal{O}}(\cdot)$ and $\tilde{\Omega}(\cdot)$ that hide the polylogarithmic factors in terms of $n$.
In addition, we say that a quantum algorithm solves a problem if it finds a solution with probability at least $9/10$. This success probability can be easily amplified to $1-1/$poly$(n)$ with a logarithmic overhead in the complexity.

\subsection{Elementary Quantum Operators}
There is a variety of quantum operators capable of operating on quantum registers to perform widely ranging manipulations. Here, we focus on key components essential for building a quantum circuit, with special focus on those that are useful for this paper's purposes,  highlighting their depth and computational complexity.

The \emph{Pauli}-X (or $X$ or NOT) gate is the quantum equivalent of the NOT gate for classical computers with respect to the standard basis $|0\rangle$ and $|1\rangle$. It operates on a single qubit, mapping $|0\rangle$ to $|1\rangle$ and $|1\rangle$ to $|0\rangle$.

The \emph{Hadamard} (or $H$) gate is a well-known single-qubit operation that maps the basis states $|0\rangle$ and $|1\rangle$ to $\frac{1}{\sqrt{2}}(|0\rangle + |1\rangle)$ and to $\frac{1}{\sqrt{2}}(|0\rangle - |1\rangle)$, respectively, thus creating  a superposition of the two basis states with equal amplitudes.

The \emph{Pauli}-Z (or $Z$ or \emph{phase-flip gate}) is the third single-qubit operator of our interest. It leaves the basis state $|0\rangle$ unchanged while mapping $|1\rangle$ to $-|1\rangle$, by applying a negative phase to it. Based on the equivalence $Z = HXH$, the Z operator can be obtained from the previous two operators.

The \emph{controlled} NOT \emph{gate} (or CNOT) is a quantum logic gate operating on a register of two qubits $|q_0,q_1\rangle$. If the control qubit $|q_0\rangle$ is set to 1, it inverts the target qubit $|q_1\rangle$, otherwise all qubits stay the same. Formally, it maps $|q_0,q_1\rangle$ to $|q_0, q_0\oplus q_1\rangle$.

The \emph{Toffoli gate} (also known as CCNOT gate) is a universal reversible logic gate that works on $3$ qubits: if the first two qubits are both set to 1, it inverts the third qubit, otherwise all bits stay the same. Formally, it maps a 3 qubits register $|q_0, q_1, q_2\rangle$ to $|q_0, q_1, q_0q_1\oplus q_2\rangle$.

Generalizations of the CNOT gate are the $n$-ary \emph{fanout} operator and the \emph{multiple}-CNOT. 
Given a quantum register $|\psi\rangle = |q_0, q_1, \ldots, q_{n-1}\rangle$ of $n$ qubits, a \emph{fanout} operator simultaneously copies the control qubit $|q_0\rangle$ onto the $n-1$ target qubits $|q_i\rangle$, for $i=1, \ldots, n-1$. 
Formally the fanout operator applies the following mapping 
$|q_0,q_1,q_2, \ldots, q_{n-1}\rangle$ to $|q_0, q_0\oplus q_1, q_0\oplus q_2, \ldots, q_0\oplus q_{n-1}\rangle$.
Although a constant time fanout can be obtained by the product of $n$ controlled-not gates, the no-cloning theorem makes it difficult to directly fanout qubits in constant depth~\cite{Hyer2005QuantumFI}. However, assuming that the target qubits are all initialized to $|0\rangle$, it is easy to see that, by a  divide-and-conquer strategy, we can compute fanout in depth $\Theta(\log(n))$ using controlled-not gates and $0$ ancill\ae\ qubits~\cite{Fang06}.

A \emph{multiple}-CNOT (or M-CNOT) operator flips the unique target qubit $|q_{n-1}\rangle$ if all the $n-1$ control qubits $|q_i\rangle$ (for $i=0, .., n-2$) are set to $|1\rangle$. 
Formally, the M-CNOT operator maps 
$|q_0,q_1,q_2, .., q_{n-1}\rangle$  to $|q_0, q_1, \ldots,  q_{n-2}, (q_0 \cdot q_1 \cdots q_{n-2}) \oplus q_{n-1}\rangle$.
The most direct way to implement an M-CNOT operator is to use the concepts of classical Boolean logic as shown in~\cite{Toffoli80} for classical circuits, and later in~\cite{barenco1995} in the field of quantum computing. However, this implementation has a linear depth with respect to the number of control qubits and requires $n-2$ ancill\ae\ qubits. In more recent papers~\cite{Yong17,Balauca22}, it was shown that a construction rearranging the circuit gates to achieve logarithmic depth by exploiting parallelism is possible, using $n-2$ and $n/2$ ancill\ae\ qubits, respectively. 

For the sake of completeness, we also mention a recent result~\cite{Rasmussen20} that enables the implementation of multi-controlled NOT gates in constant time in architectures with trapped ions and neutral atoms.

The \textsf{Swap} \emph{gate} is a two-qubit operator: expressed in basis states, it swaps the state of the two qubits $|q_0,q_1\rangle$ involved in the operation, mapping them to $|q_1,q_0\rangle$.
Finally the \emph{Fredkin gate} (or \emph{controlled-}\textsf{Swap} \emph{gate}) is a universal gate that consists in a \textsf{Swap} gate controlled by a single qubit. Both \textsf{Swap} and Fredkin gates can be obtained from the application of 3 CNOT gates and 3 Toffoli gates, respectively. 

Often in a quantum circuit it is necessary to apply  an operator controlled by a specific qubit $|c\rangle$. 
If we assume that the operator consists exclusively of CNOTs and Toffoli gates, it is not difficult to construct a controlled version of it, where all CNOT gates are converted into Toffoli gates, by adding a new control qubit in $|c\rangle$, while each Toffoli gate is converted into a Multiple-CNOT gate with 3 controls. The resulting controlled operator has the same depth as the uncontrolled operator, except for a higher proportionality factor.\\

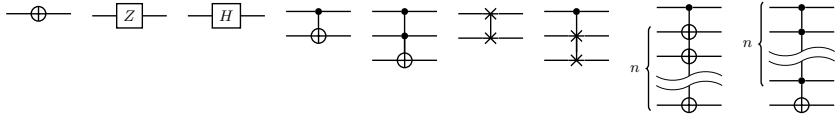
\begin{figure}[!t]
	\begin{tikzpicture}
	\node[scale=0.65] {
		\begin{quantikz}[row sep={0.5cm,between origins}]
		& \targ{} & \qw\\
		&&\\
		&&\\
		&&\\
		&&
		\end{quantikz}\ \ 
		\begin{quantikz}[row sep={0.5cm,between origins}]
		& \gate{Z} & \qw\\
		&&\\
		&&\\
		&&\\
		&&
		\end{quantikz}\ \ 
		\begin{quantikz}[row sep={0.5cm,between origins}]
		& \gate{H} & \qw\\
		&&\\
		&&\\
		&&\\
		&&
		\end{quantikz}\ \
		\begin{quantikz}[row sep={0.5cm,between origins}]
		& \ctrl{1} & \qw \\
		& \targ{} & \qw \\
		&&\\
		&&\\
		&&
		\end{quantikz}\ \
		\begin{quantikz}[row sep={0.5cm,between origins}]
		& \ctrl{2} & \qw \\
		& \ctrl{1} & \qw \\
		& \targ{}  & \qw \\
		&&\\
		&&
		\end{quantikz}\ \
		\begin{quantikz}[row sep={0.5cm,between origins}]
		& \targX{} & \qw \\
		& \swap{-1} & \qw \\
		&&\\
		&&\\
		&&
		\end{quantikz}\ \ 
		\begin{quantikz}[row sep={0.5cm,between origins}]
		& \ctrl{2} & \qw \\
		& \targX{} & \qw \\
		& \swap{-1} & \qw \\
		&&\\
		&&
		\end{quantikz}\ \ 
		\begin{quantikz}[row sep={0.5cm,between origins}]
		& \ctrl{4} & \qw \\
		\lstick[wires=4]{$n$} & \targ{}  & \qw \\
		& \targ{}  & \qw \\
		\wave &&\\
		& \targ{}  & \qw
		\end{quantikz}\ \ 
		\begin{quantikz}[row sep={0.5cm,between origins}]
		\lstick[wires=4]{$n$} & \ctrl{1} & \qw \\
		& \ctrl{2}  & \qw \\
		\wave &&\\
		& \ctrl{1}  & \qw \\
		& \targ{}  & \qw
		\end{quantikz}};
	\end{tikzpicture}
	\caption{\label{fig:basic-gates}The representation of the following basic gates (from left to right): Pauli-X, Pauli-Z, Hadamard, CNOT, Toffoli, Swap, Fredkin, Fanout, and MCNOT.}
\end{figure}

\section{A Common Approach to Fixed Matching Problems}\label{sec:general}

In this section we provide a procedural solution for the three problems of interest to us. 
However, we first introduce the main tools useful in solving our problems on strings. We also give some definitions useful for later discussion and suggest an implementation of the main functions that will later serve as building blocks for the quantum algorithms we will introduce in later sections. When necessary we will prove our claims by means of technical lemmas.

Before entering into the details it is necessary to formalize some assumptions we make along the description.

Since a quantum register of dimension $\log(n)$ can take on all values between $0$ and $n-1$, like any binary sequence of the same dimension, for simplicity we will assume that both input strings $x$ and $y$ have length $n=2^p$, for some $p>0$. We also assume that $x$ and $y$ end with two different special characters, \$ and \%,  not belonging to the alphabet $\Sigma$. These assumptions can be made without loss of generality since it would suffice to take the smallest value $p$ for which we have $n<2^p$ and concatenate the text with $2^p-n$ copies of the special character. 
For instance if $x=abaacbcbbca$ is a text of length $11$, we silently concatenate it with $5$ copies of the character \$, i.e., we assume that $x=abaacbcbbca\$\$\$\$\$$, namely a string of length $16$. This assumption has no effect on the complexity since the resulting string has at most twice the length of the original string.

We also observe that any substring of length $d$ can begin at any position $j$ of the text, for $0\leq j\leq n-d$. However, in this paper we also relax such condition by admitting values of $j$ between $0$ and $n-1$ and assuming that a substring of the text can be obtained in a circular way. This assumption can also be made without loss of generality since the last character of $x$ is always the special character \$, while the last character of $y$ is always the special character \%. This ensures us that no circularly obtained substring can be returned as LCS.%

\begin{observation}\label{obs:binary}
For simplicity of discussion we present the circuits and the resulting algorithms as processing binary strings. This simplification, however, does not lead to any substantial change in our result since, assuming that each character can be represented with (at most) $\log(n)$ bits, it is easy to show that the quantum operators used in the construction of the algorithm would undergo an increase in their complexity at most equal to a factor of $\log(n)$.
Thus, henceforth we assume that $x$ and $y$ are two binary strings, both of length $n$, and further assume that $d$ is a positive value with $d\leq n$.
\end{observation}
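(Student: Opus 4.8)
The plan is to prove the observation by a layer-by-layer accounting argument: I would show that passing from the binary encoding to the general one multiplies the depth of \emph{each} elementary building block by at most a factor of $\log n$, and then invoke the fact that the full circuit is the composition of polylogarithmically many such blocks, so that its depth grows by the same factor at most. First I would fix the encoding: since the alphabet has size $\sigma \leq n$, each character is stored in a block of $\ell \defAs \ceil{\log \sigma} \leq \log n$ qubits, so the two strings now occupy $\mathcal{O}(n\log n)$ qubits in place of $n$. I would then split the operators used in the construction into two families: those acting on \emph{position/control} registers (the cyclic rotations, the fanouts over the $n$ positions, and the M-CNOTs that aggregate per-position outcomes), and those acting on the \emph{character content} (the per-position character comparisons and the conditional copies of character blocks).

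For the first family nothing changes: these operators manipulate index registers of width $\log n$ and range over the $n$ positions irrespective of how a single character is encoded, so their depth is identical in both settings. For the second family the single-qubit operations must be lifted to blocks of $\ell$ qubits, and here the key point is that a bitwise operation on the $\ell$ qubits of a block acts on disjoint wires and hence runs in a single layer; the only genuine overhead is the \emph{reduction} that collapses the $\ell$ bitwise equalities into the single bit ``the two characters coincide''. This reduction is an AND/NOR over $\ell$ qubits, realizable by an M-CNOT of depth $\mathcal{O}(\log \ell) = \mathcal{O}(\log\log n)$, well within the claimed bound; even a naive serialization of a block costs only the factor $\ell \leq \log n$.

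The hard part, and the only place where I expect real checking, is the aggregation stage, where the algorithm combines the per-position comparison bits (across a window of length $d$ for FPM and FFM, and across all starting positions for SFSC). I must confirm that enlarging each character from $1$ to $\ell$ qubits does not disturb the asymptotics of these fanouts and M-CNOTs beyond the stated factor. Since a fanout or an M-CNOT over $m$ wires has depth $\Theta(\log m)$, replacing $n$ wires by $\mathcal{O}(n\log n)$ wires raises their depth only from $\mathcal{O}(\log n)$ to $\mathcal{O}(\log(n\log n)) = \mathcal{O}(\log n)$, a mere constant-factor change. Putting the three contributions together (unchanged control operators, an $\mathcal{O}(\log\log n)$ overhead per block comparison, and a constant-factor change in the aggregations) every layer of the circuit grows in depth by at most a factor of $\log n$. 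As the binary circuit has polylogarithmic depth and is built by composing these layers, the general-alphabet circuit has depth larger by at most a factor $\log n$, which is exactly the claim.
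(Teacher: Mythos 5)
Your argument is correct and follows essentially the same route the paper intends for this observation: encode each character in a block of at most $\log(n)$ qubits and verify, operator by operator, that adapting each gate costs at most a $\log(n)$ factor in depth (in fact, as you note, only the character-comparison stage is genuinely affected, and even there the overhead is $\mathcal{O}(\log\log n)$ with a log-depth M-CNOT or $\mathcal{O}(\log n)$ with naive serialization). The paper states this only as an unproved observation, and your per-operator accounting is a faithful fleshing-out of exactly that sketch.
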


\subsection{Useful Tools for Substring Matching}
The first definition we give is that of \emph{matching substring vectors}.
Given a value $i\geq 0$ we define the matching substring vectors $\lambda^i$, for $0\leq i \leq \log(d)$, as bit-vectors of size $n$ such that, for $0\leq j < n-$, we have

$$
	\lambda^i[j] = \begin{cases}
			1 & \textrm{ if } x[j \,..\, j+2^i-1] =  y[j \,..\, j+2^i-1]\\[-.1cm]
			0 & \textrm{ otherwise.}
 		\end{cases}
$$

Roughly speaking, $\lambda^i$ contains a bit set at position $j$ only if the substring of $x$, of length $2^i$ and  beginning at position $j$, matches its counterpart in $y$.

Such vectors can be recursively computed by means of the following relation
\begin{equation}\label{eq:varphi}
	\lambda^i[j] = \begin{cases}
			1 & \textrm{ if } i=0 \textrm{ and } x[j] =  y[j]\\[-.1cm]
			0 & \textrm{ if } i=0 \textrm{ and } x[j] \neq y[j]\\[-.1cm]
			\lambda^{i-1}[j] \textrm{ \textsc{and} } \lambda^{i-1}[j+2^{i-1}] & \textrm{ if } i>0.
 	\end{cases}
\end{equation}

\begin{example}
Let $x=\texttt{cgaactta}$ and $y=\texttt{ctacctta}$ be two strings of length $8$. Regarding the values of the vectors $\lambda^i$ we observe the following configuration
$$
\begin{array}{lll}
	\lambda^0 & = \texttt{ [ 1 0 1 0 1 1 1 1 ] } & \longleftarrow\ x[i] = y[i] \textrm{ for } i\ \in\ \{0,2,4,5,6,7\}\\
	\lambda^1 & = \texttt{ [ 0 0 0 0 1 1 1 0 ] } & \longleftarrow\ x[i .. i+1] = y[i .. i+1] \textrm{ for } i\ \in\ \{4,5,6\}\\
	\lambda^2 & = \texttt{ [ 0 0 0 0 1 0 0 0 ] } & \longleftarrow\ x[i .. i+3] = y[i .. i+3] \textrm{ for } i\ \in\ \{4\}\\
	\lambda^3 & = \texttt{ [ 0 0 0 0 0 0 0 0 ] } & \\
\end{array}
$$
\end{example}

\smallskip

Let  $\bar{d} = \bar{d}[0\,..\,\lceil \log(d) \rceil]$ be the vector storing the binary representation of $d$ in its reverse order, i.e., $d = \sum_{i=0}^{\log(d)} \bar{d}[i]2^i$. We use the expression $\bar{d}[0{:}m]$ to denote the partial sum of such binary representation, up to $m+1$ bits. Formally we have $\bar{d}[0{:}m] = \sum_{i=0}^{m} \bar{d}[i]2^i$. We set by default $\bar{d}[0{:}i] = 0$ for any $i<0$.

\smallskip

In addition, let $S$ be the sequence of positions $i$ for which $\bar{d}[i]=1$, in increasing order, so that $d = \sum_{i\in S} 2^i$. We say that a string $w$ of length $d$ can then be decomposed into the concatenation of $|S|$ substrings so as to have
\begin{equation}\label{EQ:lemma3}
	w = \underset{i \in S}{\biguplus}\, w\Big[\bar{d}[0{:}i-1]\, ..\, \bar{d}[0{:}i]-1 \Big] ,
\end{equation}
where we used the symbol $\underset{i \in S}{\biguplus}$ to denote string concatenation over the ordered sequence $S$ of positions $i$.
Observe that the length of the substring $w[\bar{d}[0{:}i-1] \,..\, \bar{d}[0{:}i]]$ is equal to $\bar{d}[0{:}i]-\bar{d}[0{:}i-1] = \sum_{j=0}^{i-1}2^j - \sum_{j=0}^{i}2^j = 2^i$.
We call such a decomposition of $w$ as its \emph{power-based decomposition}.

\begin{example}
Let $w=\texttt{cgaac}$ be a string of length 5. Since the value $d=5$ is decomposed into $1+4$, the power-based decomposition of $w$ is the sequence of substrings $\langle \texttt{c}, \texttt{gaac} \rangle$. In such a case the set $S$ is equal to $S=\{0,2\}$.
Assume now that $w=\texttt{cgaacttacgt}$ be a string of length 11. Since the value $d=11$ is decomposed into $1+2+8$, the power-based decomposition of $w$ is the sequence of substrings $\langle \texttt{c}, \texttt{ga}, \texttt{acttacgt} \rangle$. In this second case the set $S$ is equal to $S=\{0,1,3\}$.
\end{example}

We state the following technical lemma.

\begin{lemma}\label{lem:bms}
Let $x$ and $y$ be any two strings of length $n$ over the same alphabet $\Sigma$, and let $d$ be a constant, with $0<d\leq n$. 
For a given $j$ such that $0\leq j< n-d$, we have
\begin{equation}\label{eq:thm}
	x[j\,..\,j+d-1]=y[j\,..\,j+d-1]\ \ \iff\ \  \bigwedge_{i \in S}\lambda^i\Big[j+\bar{d}[0{:}i-1]\Big] = 1,
\end{equation}
where $S$ stands for the sequence of all positions $i$ in $bar{d}$ such that $bar{d}[i]=1$, in increasing order.
\end{lemma}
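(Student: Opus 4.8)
The plan is to prove the biconditional \eqref{eq:thm} by exploiting the power-based decomposition of the relevant substrings together with the recursive definition of the matching vectors $\lambda^i$. The key observation is that the window $x[j\,..\,j+d-1]$ has length $d = \sum_{i\in S} 2^i$, so it admits a power-based decomposition into $|S|$ consecutive blocks, the $i$-th of which (for $i\in S$) has length $2^i$ and starts at offset $\bar{d}[0{:}i-1]$ within the window, i.e.\ at absolute position $j+\bar{d}[0{:}i-1]$. The same decomposition applies verbatim to $y[j\,..\,j+d-1]$, since the block boundaries depend only on $d$. Thus the two length-$d$ windows are decomposed into aligned pairs of blocks, each pair having the same length and the same starting position in the two strings.

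First I would establish the forward direction. Assuming $x[j\,..\,j+d-1]=y[j\,..\,j+d-1]$, each aligned pair of blocks must be equal character-by-character; in particular, for every $i\in S$, the block $x[j+\bar{d}[0{:}i-1]\,..\,j+\bar{d}[0{:}i-1]+2^i-1]$ equals its counterpart in $y$. By the defining property of the matching substring vectors (a $1$ at a position exactly when the $2^i$-length substrings of $x$ and $y$ starting there coincide), this is precisely the statement $\lambda^i[j+\bar{d}[0{:}i-1]]=1$. Taking the conjunction over all $i\in S$ gives the right-hand side. For the converse, if $\lambda^i[j+\bar{d}[0{:}i-1]]=1$ for every $i\in S$, then each block of length $2^i$ matches between $x$ and $y$; since the concatenation of these blocks reconstitutes the full windows by \eqref{EQ:lemma3}, the windows themselves coincide, yielding the left-hand side.

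The one genuine lemma I would want to nail down cleanly is the claim that $\lambda^i[p]=1$ iff $x[p\,..\,p+2^i-1]=y[p\,..\,p+2^i-1]$, i.e.\ that the recursive formula \eqref{eq:varphi} actually computes the quantity promised by the definition of $\lambda^i$. This is most naturally handled by induction on $i$: the base case $i=0$ is the single-character comparison, and the inductive step uses that a substring of length $2^i$ starting at $p$ splits into its first $2^{i-1}$ characters (matched iff $\lambda^{i-1}[p]=1$) and its last $2^{i-1}$ characters starting at $p+2^{i-1}$ (matched iff $\lambda^{i-1}[p+2^{i-1}]=1$), so the $\textsc{and}$ in \eqref{eq:varphi} correctly records equality of the whole $2^i$-block.

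The main obstacle, such as it is, is bookkeeping rather than conceptual: one must verify that the offsets $\bar{d}[0{:}i-1]$ and block lengths $2^i$ line up so that the decomposition blocks tile the window $[\,j\,..\,j+d-1\,]$ exactly, with no gaps or overlaps. This reduces to the telescoping identity $\bar{d}[0{:}i]-\bar{d}[0{:}i-1]=2^i$ (already noted in the discussion preceding the lemma) together with the fact that summing the block lengths over $i\in S$ recovers $d$. Once this tiling is confirmed, both directions follow immediately, and I would present the argument as a short chain of equivalences: equality of the windows $\iff$ equality of every aligned block $\iff$ the stated conjunction of $\lambda^i$ values.
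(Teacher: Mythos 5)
Your proposal is correct and follows essentially the same route as the paper's own proof: both rest on the power-based decomposition \eqref{EQ:lemma3} of the length-$d$ windows into aligned blocks of length $2^i$ for $i\in S$, and on the defining property of the vectors $\lambda^i$ to translate block equality into the stated conjunction. Your extra care in checking both directions explicitly, the tiling of offsets $\bar{d}[0{:}i-1]$, and the induction showing that the recursion \eqref{eq:varphi} realizes the definition of $\lambda^i$ only makes explicit what the paper's shorter argument takes for granted.
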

\begin{proof}
If we denote by $\bar{d}$ the vector that stores the binary representation of $d$ and by $S$ the sequence of positions $i$ for which $\bar{d}[i]=1$, in increasing order, then it is straightforward to prove that $d = \sum_{i\in S} 2^i$.

Suppose now that $x[j \,..\, j+d-1] = y[j \,..\, j+d-1]$. Based on equation (\ref{EQ:lemma3}), the substring $x[j \,..\, j+d-1]$ can be decomposed into the substrings $x[j + \bar{d}[0{:}i-1]\, ..\, j + \bar{d}[0{:}i]-1]$, each of which coincides with its counterpart in $y$, for every position $i$ in $S$. Since $x[j + \bar{d}[0{:}i-1]\, .. \, j + \bar{d}[0{:}i]-1] = y[j + \bar{d}[0{:}i-1]\, ..\, j + \bar{d}[0{:}i]-1]$ if and only if $\lambda^i[j + \bar{d}[0{:}i-1]]$, we get the biimplication (\ref{eq:thm}), proving the lemma. 

\qed
\end{proof}

\begin{example}
Let $x=$\texttt{agccatgccaatgcat} and $y=$\texttt{cgcgataccaattcat} two strings of length $16$, and let $d=5$. Observe that $x$ and $y$ have a matching substring of length $5$ beginning at position $7$ of both strings. The vector $\bar{d}$ corresponding to the reverse binary representation of  $d$ is $\bar{d}=$\texttt{101}, thus we need to compute the two bit vectors $\lambda^0=$\texttt{0110110111110111} and $\lambda^2=$\texttt{0000000110000000}. According to Lemma \ref{lem:bms} and regarding the substrings beginning at position $7$, we have $x[7 \,..\, 12] = y[7 \,..\, 12]$ if and only if $\lambda^0[7] \wedge \lambda^2[8] = 1$, which is indeed the case in our example.
\end{example}

\subsection{The Algorithm}
Based on Lemma \ref{lem:bms}, the algorithm maintains the set of vectors $D^i$, for $0\leq i < \log(d)$, which retains information about substrings of length $\bar{d}[0{:}i]$ in common between $x$ and $y$, maintaining a bit set to $1$ at position $j+\bar{d}[0{:}i]$ when a common substring of length $\bar{d}[0{:}i]$ begins at position $j$. 

More formally, for $0\leq i<\log(d)$, we have 

\begin{equation}\label{eq:D}
	D^i[j+\bar{d}[0{:}i]] = 1\ \iff\ x[j \,..\, j + \bar{d}[0,i]-1] = y[j \,..\, j + \bar{d}[0,i]-1].
\end{equation}

The idea is to process strings whose length increases based on the power-based decomposition of all substrings of length $d$.  
Roughly speaking, the substring $x[j \,..\, j+\bar{d}[0,i]-1]$ performs a match with its counterpart in $y$ if and only if $D^i[j+\bar{d}[0{:}i]]=1$.

Since we assume that an empty substring (case where we have $d=0$) starting at position $j$ is always shared between the two strings, for any value $0\leq j < n$, the vector can be initialized to $D^{-1}=1^{n+1}$.

\begin{observation}
Note that to solve the FPM problem we should allow the substring in common to be identified with beginning at position $0$. This suggests that we initialize the vector $D^{-1}$ to the value $10^n$. Differently if we want to identify a possible common string with beginning at position $j$ (FFP problem), the vector $D^{-1}$ should be initialized to the value $0^{j}10^{n-j}$. Finally if we initialize the vector $D^{-1}$ to the value $1^{n+1}$ we allow the common substring to be recognized with beginning at any position $j$, with $0\leq j < n$, solving the SFSC problem.
\end{observation}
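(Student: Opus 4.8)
The plan is to prove one unifying invariant for the vectors $D^i$ in which the contribution of the initial vector $D^{-1}$ is made explicit, and then to read off the three problems as the three natural choices of $D^{-1}$. Write $t=\lfloor\log(d)\rfloor$ for the top level and recall that $\bar{d}[0{:}t]=d$. First I would establish, by induction on $i$ for $-1\leq i\leq t$ and all $0\leq j<n$, the statement
$$
D^i\big[j+\bar{d}[0{:}i]\big]=1\ \iff\ D^{-1}[j]=1\ \textrm{ and }\ x\big[j\,..\,j+\bar{d}[0{:}i]-1\big]=y\big[j\,..\,j+\bar{d}[0{:}i]-1\big].
$$
This invariant says precisely that the choice of $D^{-1}$ acts as a fixed \emph{mask} on the admissible starting positions $j$, while the rest of the recursion records genuine substring matches exactly as in equation~(\ref{eq:D}).

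For the base case $i=-1$ I would use $\bar{d}[0{:}-1]=0$ together with the convention that the empty substring starting at any $j$ always matches, so that the right-hand side collapses to $D^{-1}[j]=1$, which is the definition of the left-hand side. For the inductive step I would split on the bit $\bar{d}[i]$. When $\bar{d}[i]=0$ the accumulated length does not grow, since $\bar{d}[0{:}i]=\bar{d}[0{:}i-1]$, so $D^i=D^{i-1}$ and the claim is inherited verbatim. When $\bar{d}[i]=1$ the update appends a block of length $2^i$, and I would use the update rule $D^i[j+\bar{d}[0{:}i]]=D^{i-1}[j+\bar{d}[0{:}i-1]]\wedge\lambda^i[j+\bar{d}[0{:}i-1]]$ together with the identity $\bar{d}[0{:}i-1]+2^i=\bar{d}[0{:}i]$: the inductive hypothesis supplies the match of the prefix of length $\bar{d}[0{:}i-1]$, carrying the $D^{-1}[j]$ factor, while $\lambda^i[j+\bar{d}[0{:}i-1]]$ supplies the match of the appended $2^i$-block, and their conjunction is exactly the match of the whole length-$\bar{d}[0{:}i]$ substring.

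Evaluating the invariant at $i=t$ gives $D^t[j+d]=1\iff D^{-1}[j]=1 \textrm{ and } x[j\,..\,j+d-1]=y[j\,..\,j+d-1]$, where the second conjunct is exactly the condition of Lemma~\ref{lem:bms}. From here the three claims are immediate specializations. Taking $D^{-1}=10^n$ sets $D^{-1}[j]=1$ only for $j=0$, so $D^t[d]=1$ iff $x[0\,..\,d-1]=y[0\,..\,d-1]$ and every other entry is forced to $0$; this decides FPM. Taking $D^{-1}=0^{j}10^{n-j}$ keeps only the starting position $j$, so $D^t[j+d]=1$ iff $x[j\,..\,j+d-1]=y[j\,..\,j+d-1]$, which decides FFM at $j$. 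Taking $D^{-1}=1^{n+1}$ leaves every $D^{-1}[j]=1$, so $D^t[j+d]=1$ iff a shared length-$d$ substring starts at $j$; testing whether any bit of $D^t$ is set therefore decides $\exists\,j: x[j\,..\,j+d-1]=y[j\,..\,j+d-1]$, which is SFSC.

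I expect the only real work to lie in the inductive step, and within it the bookkeeping of the index shift by $2^i=\bar{d}[0{:}i]-\bar{d}[0{:}i-1]$ and the verification that the $D^{-1}[j]$ factor rides through the recursion unaltered, since each update only conjoins fresh $\lambda^i$ information and never reintroduces a starting position that the mask has already excluded. Once this invariant is in hand the three initializations require nothing beyond substitution, so the heart of the argument is precisely the claim that the choice of $D^{-1}$ behaves as a position mask preserved by every step of the recursion.
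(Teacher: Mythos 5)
Your proposal is correct and takes essentially the same route as the paper: your masked invariant $D^i\big[j+\bar{d}[0{:}i]\big]=1 \iff D^{-1}[j]=1 \textrm{ and } x[j\,..\,j+\bar{d}[0{:}i]-1]=y[j\,..\,j+\bar{d}[0{:}i]-1]$ is precisely the paper's characterization (\ref{eq:D}) with the role of the initialization vector made explicit, and your induction over the recursion (\ref{eq:recD}), splitting on the bit $\bar{d}[i]$ and using the power-based decomposition underlying Lemma~\ref{lem:bms}, is exactly the argument the paper leaves implicit before specializing $D^{-1}$ to $10^n$, $0^j10^{n-j}$, and $1^{n+1}$. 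The only difference is one of rigor, not of method: you prove by induction what the paper asserts as immediate.
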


After the initialization phase the vector $D^i$, for $0\leq i \leq \log(d)$, can be computed recursively by the following formula

\begin{equation}\label{eq:recD}
	D^i = (D^{i-1} \wedge \lambda^i) \gg 2^i
\end{equation}

where the $\wedge$ operator represents the bit-to-bit logical \textsc{and} between the elements of the two vectors, and the $\gg$ operator represents the rightward shift operator. Such computation allows us to have, after $\log(d)$ steps of the iteration, $x[j \,..\, j+d-1] = y[j \,..\, j+d-1]$ if and only if $D^{\log(d)}[j+d]=1$.

\begin{figure}[!t]
\begin{footnotesize}
\begin{center}
\rotatebox[origin=c]{90}{\scriptsize Step 2~~~~~~~~~~Step 1~~~~~~~~~~Step 0~~~~~~~~~~Input}~~~~~~
\begin{tabular}{lrll}
	\hline
						& & & \\
						& $x =$ 		 								& \texttt{[ a g c c a t g c c a a t g c a t ]} & \\[0.1cm]
						& $y =$ 		 								& \texttt{[ c g c g a t a c c a a t t c a t ]} & \\[0.1cm]
						& $D^{-1} =$ 										& \texttt{[ 1 1 1 1 1 1 1 1 1 1 1 1 1 1 1 1 1 ]} & \\
						& & & \\
	\hline	
						& & & \\
						& $\lambda^0 = \textsc{M}(x,y)$ 								& \texttt{[ 0 1 1 0 1 1 0 1 1 1 1 1 0 1 1 1 ]} &  \\[0.1cm]
	$\bar{d}[0] = 1 \rightarrow$	& $D^0 = (D^{-1} \wedge \lambda^0) \gg 2^0 =$				& \texttt{[ 0 0 1 1 0 1 1 0 1 1 1 1 1 0 1 1 1 ]} & \\
						& & & \\
	\hline
						& & & \\
						& $\lambda^1 = \textsc{ext}1(\lambda^0) = $				& \texttt{[ 0 1 0 0 1 0 0 1 1 1 1 0 0 1 1 ]} & \\[0.1cm]
	$\bar{d}[1] = 0 \rightarrow$	& $D^1 = D^{0} = $				& \texttt{[ 0 0 1 1 0 1 1 0 1 1 1 1 1 0 1 1 1 ]} & \\
						& & & \\
	\hline
						& & & \\
						& $\lambda^2 = \textsc{ext}2(\lambda^1) = $				& \texttt{[ 0 0 0 0 0 0 0 1 1 0 0 0 0 ]} & \\[0.1cm]
	$\bar{d}[2] = 1 \rightarrow$ 	& $D^2 = (D^1 \wedge \lambda^2) \gg 2^2 =$ 				& \texttt{[ 0 0 0 0 0 0 0 0 0 0 0 0 1 0 0 0 0 ]} & \\
						& & & \\
	\hline
\end{tabular}
\end{center}
\end{footnotesize}
\caption{\label{fig:example}An example of the execution of the FSM algorithm on two input strings $x$ and $y$ of size $16$ and a parameter $d$ equal to $5$ representable by $3$ bits ($\bar{d} = [1,0,1]$). The algorithm thus consists of three execution steps.}
\end{figure}

In Fig.\ref{fig:algorithm-bpm-pseudocode} we show a procedural representation of the algorithm for solving the generic problem for the fixed substring matching problem, called FSM, while in Fig.\ref{fig:example} we show an example of the execution of the FSM algorithm on two input strings $x$ and $y$ of size $16$ and a parameter $d$ equal to $5$ representable by $3$ bits ($\bar{d} = [1,0,1]$). The algorithm thus consists of three execution steps. 

The procedure takes as input, in addition to the two strings $x$ and $y$ and the parameter $d$, the vector $D^{-1}$ initialized according to one's needs. The procedure consists of an initialization phase of the vector $\lambda^0$ (line 1) and the vector $D^0$ (line 2), and a main loop that is executed $\log(d)$ times (lines 3-6). The $i$-th iteration step of the loop copies the vectors $\lambda^i$ and $D^i$ by exploiting the proposed recursive relations in (\ref{eq:varphi}) and (\ref{eq:recD}), respectively. At the end of the main loop, the logical \textsc{or} of all elements contained in the vector $D^{\log(d)-1}$ is computed and returned in the output (line 7).

\begin{figure}[!t]
\begin{center}
\begin{footnotesize}
\begin{tabular}{lll}
	\multicolumn{3}{l}{\textsc{FSM}$(d,x,y,D^{-1})$:}\\
	1.& \textsf{$\lambda^0 \leftarrow \textsf{M}(x,y)$} &  \\
	2.& \textsf{if $(\bar{d}[0]=1)$ then $D^0 \leftarrow (D^{-1} \wedge \lambda^0) \gg 2^0$} &  \\
	3.& \textsf{for $i \leftarrow 1$ to $\log(d)$ do } &  \\
	4.& \quad \textsf{$\lambda^i \leftarrow \textsc{ext}_i(\lambda^{i-1})$} &  \\
	5.& \quad \textsf{if $(\bar{d}[i]=1)$ then $D^{i} \leftarrow (D^{i-1} \wedge \lambda^{i}) \gg 2^i$} &  \\
	6.& \quad \textsf{else $D^{i} \leftarrow (D^{i-1}$} &  \\
	7.& \textsf{$r \leftarrow \bigvee_{j=d}^{n+d} D^{\log(d)}[j]$} &  \\
	8.& \textsf{return $r$} &  \\
\end{tabular}
\end{footnotesize}
\end{center}
\vspace{-0.5cm}
\caption{\label{fig:algorithm-bpm-pseudocode}The pseudocode of the FSM algorithm}
\end{figure}

\section{Actual Implementation by Quantum Circuits}\label{sec:actual_circuits}

In this section we provide a real implementation of the functions and operators described above through the definition of specific quantum circuits.

Since a quantum circuit is constructed by the composition of quantum operators, which in turn are constructed by the composition of elementary quantum gates, we first proceed to define the quantum operators useful in solving our problems. We will then use these operators as building blocks to give a simplified schematic of the overall circuit that solves the problem. 

The operators we use in the construction of our circuit are as follows: 
\begin{itemize}
\item the circular shift operator (ROT)
\item the matching operator (M) for the initialization of the function $\lambda^{i}$
\item the extension operator (EXT$_i$) of the function $\lambda^{i}$
\item the register reversal operator ($\doteqdot$)
\item the controlled bitwise conjunction operator ($\wedge$) and 
\item the register disjunction operator ($\vee$)
\item the copy operator with reverse control (C)
\end{itemize}

\subsection{The Circular Shift Operator}\label{sec:circular_rotation}
A \emph{circular shift operator} (or rotation operator) ROT$_s$ applies a rightward shift of $s$ positions to a register of $n$ qubits so that the element at position $i$ is moved to position $(i+s) \mod n$. In other words, the elements whose position exceeds the size $n$ of the register are moved, in a circular fashion, to the first positions of the register. Such an operator has been effectively used in other quantum text searching algorithms~\cite{Niroula21,CF23}.
The details regarding its construction and its implementation have been detailed by Pavone and Viola in \cite{PV23}, where the authors prove that the resulting operator con be executed in $\mathcal{O}(\log(n))$ time.

More formally, for all $x\in \{0,1\}^n$ and all $s \in \{0,1\}^{\log(n)}$, the controlled circular shift operator performs the following mapping
$$
	\textsc{ROT}_s |x\rangle = |\vv{x}^s\rangle
$$

In our algorithm the cyclic rotation operator will be applied exclusively for rotation values of $2^m$, for $0\leq m\leq \log(n)$. For these types of operators, it was shown in \cite{PV23} that the depth of the resulting circuit is proposrtional to $\log(n)-\log(m)$.
In addition, in our algorithm the application of the cyclic rotation operator will be controlled by a control qubit $\bar{c}$, increasing the cost of that operator by a factor $\mathcal{O}(\log(n))$.
In the graphical representation of a quantum circuit we use the following symbol to identify the cyclic rotation operator controlled by the register $|d\rangle$.

\begin{center}
\begin{tikzpicture}
\node[scale=0.8] {
\begin{quantikz}[row sep={0.5cm,between origins}]
\lstick{\ket{c}} & \ctrl{2} & \qw & & \lstick{\ket{c}} \\
&  &   &  \\
\lstick{\ket{x}} & \gate{\textsc{rot}_s} & \qw  & &  \lstick{\ket{\vv{x}^s}} 
\end{quantikz}
};
\end{tikzpicture}
\end{center}

\subsection{The Operators for Building the Matching Substring Vector}\label{sec:matching_vector}

In a quantum context, each vector $\lambda^i$ is implemented with a quantum register $|\lambda^i\rangle$ of the same size.   The registers are computed iteratively, starting from register $\lambda^0$, and proceedings by computing the register $\lambda^i$ from $\lambda^{i-1}$, for $0< i \leq \log(n)$.

\smallskip 
 
The first register $|\lambda^0\rangle$ is computed by the match operator \textsf{M}. Specifically, the \textsf{M} operator performs, for alla $x,y \in \{0,1\}^n$, the following mapping
$$
	\textrm{M} |x\rangle |y\rangle |0^n\rangle  = |x\rangle |y\rangle |\lambda^0\rangle
$$

Based on the following simple logic relation

$$
	\lambda^0[j]  =  \left( x_j=1 \wedge y_j=1\right) \otimes \left( \neg (x_j=1) \wedge \neg (y_j=1) \right)
$$
the matching gate M is implemented by two sets of $n$ parallel Toffoli gates, the second of which is squeezed by two batteries of parallel X gates.
More formally, to compute the qubit $|\lambda^0[j]\rangle$ we perform the following computation

$$
\begin{array}{ll}
	\textrm{M} |x_i\rangle |y_i\rangle |0\rangle & = (\textrm{X} \otimes \textrm{X} \otimes \textrm{I})\  \textrm{CCX}\  (\textrm{X} \otimes \textrm{X} \otimes \textrm{I})\ \textrm{CCX}\  |x_j\rangle |y_j\rangle |0\rangle \\[0.1cm]
			& = (\textrm{X} \otimes \textrm{X} \otimes \textrm{I})\  \textrm{CCX}\  (\textrm{X} \otimes \textrm{X} \otimes \textrm{I})\  |x_j\rangle |y_j\rangle |x_j \wedge y_j \rangle \\[0.1cm]
			& = (\textrm{X} \otimes \textrm{X} \otimes \textrm{I})\  \textrm{CCX}\  | \neg x_j\rangle | \neg y_j\rangle |x_j \wedge y_j \rangle \\[0.1cm]
			& = (\textrm{X} \otimes \textrm{X} \otimes \textrm{I})\  | \neg x_j\rangle | \neg y_j\rangle | (x_j \wedge y_j) \otimes (\neg x_j \wedge \neg y_j) \rangle \\[0.1cm]
			& = | x_j\rangle | y_j\rangle | (x_j \wedge y_j) \otimes (\neg x_j \wedge \neg y_j) \rangle \\[0.1cm]
			& =  |x_j\rangle |y_j\rangle |\lambda^0[j]\rangle 	
\end{array}
$$

\smallskip

Fig.~\ref{fig:circuit_varphi} (on the left) shows the graphical representation of the \textsf{M} match operator over two input registers $x$ and $y$ of 8 qubits each. The resulting register, $\lambda^0$, of the same size, has the $j$-th qubit set to 1 if and only if $x_j=y_j$. On the right of of the operator is shown its compact graphical form.

The subsequent $|\lambda^i\rangle$ registers, with $i>0$, are instead computed from $|\lambda^{i-1}\rangle$ by the extension operator, named EXT$_i$. Formally, the extension operator EXT$_i$ performs, for all $0< i \leq \log(n)$, the following mapping

$$
	\textrm{EXT}_i |\lambda^{i-1}\rangle |0^n\rangle  = |\lambda^{i-1}\rangle |\lambda^i\rangle 
$$

Such computation is based on the following simple logic relation

$$
	\lambda^i[j] = \lambda^{i-1}[j] \wedge \lambda^{i-1}[j+2^{i-1}]
$$

Therefore the EXT$_i$ operator can be implemented by means of a set of Toffoli gates. More formally we have

$$
\begin{array}{ll}
	\textrm{EXT}_{2^i} |\lambda^{i-1}[j]\rangle  |\lambda^{i-1}[j+2^i]\rangle |0\rangle & = \textrm{CCX}\   |\lambda^{i-1}[j]\rangle  |\lambda^{i-1}[j+2^i]\rangle |0\rangle \\[0.1cm]
			& =  |\lambda^{i-1}[j]\rangle  |\lambda^{i-1}[j+2^i]\rangle |\lambda^{i-1}[j]  \wedge \lambda^{i-1}[j+2^i]\rangle \\[0.1cm]
			& =  |\lambda^{i-1}[j]\rangle  |\lambda^{i-1}[j+2^i]\rangle |\lambda^{i}[j] \rangle 	
\end{array}
$$

It is easy to observe that the extension operator requires constant time, consisting of two sets of parallel Toffoli gates, the first operating on even positions and the second on odd positions. Therefore, it follows that the computation of all registers $|\lambda^i\rangle$, for $0\leq i \leq \log(n)$, requires an overall $\mathcal{O}(\log(n))$ time.

\begin{figure}[!t]
\vspace{-0.5cm}
\begin{center}
\hspace{1cm}$|x\rangle |y\rangle |\lambda^0\rangle \leftarrow$ \textsf{M}$|x\rangle |y\rangle |0^n\rangle$ \hspace{3cm} |\textsc{$\lambda^{i-1}\rangle |\lambda^{i}\rangle \leftarrow$ ext$_{i} |\lambda^{i-1}\rangle |0^n\rangle$}\\[0.2cm] 
\begin{tikzpicture}
\node[scale=0.55] {
\begin{quantikz}[row sep={0.58cm,between origins},column sep = 0.2cm]
&  &   & & & & & & \lstick{\ket{x_0}} & \ctrl{7} & \qw & \qw & \qw & \qw & \qw & \targ{} & \ctrl{7} & \qw & \qw & \qw & \qw & \qw & \targ{} & \qw \\
&  &   & & & & & & \lstick{\ket{x_1}} & \qw & \ctrl{7} & \qw & \qw & \qw & \qw & \targ{} & \qw & \ctrl{7} & \qw & \qw & \qw & \qw & \targ{} & \qw \\
\lstick{\ket{x}} & \ctrl{7} & \qw  & & & & & & \lstick{\ket{x_2}} & \qw & \qw & \ctrl{7} & \qw & \qw & \qw & \targ{} & \qw & \qw & \ctrl{7} & \qw & \qw & \qw & \targ{} & \qw \\
&  &   & & & & & & \lstick{\ket{x_3}} & \qw & \qw & \qw & \ctrl{7} & \qw & \qw & \targ{} & \qw & \qw & \qw & \ctrl{7} & \qw & \qw & \targ{} & \qw \\
&  &   & & & & & & \lstick{\ket{x_4}} & \qw & \qw & \qw & \qw & \ctrl{7} & \qw & \targ{} & \qw & \qw & \qw & \qw & \ctrl{7} & \qw & \targ{} & \qw \\
&  &   & & & & & & \lstick{\ket{x_5}} & \qw & \qw & \qw & \qw & \qw & \ctrl{7} & \targ{} & \qw & \qw & \qw & \qw & \qw & \ctrl{7} & \targ{} & \qw \\
& & & & & & & & & \\
&  &   & & & & & & \lstick{\ket{y_0}} & \ctrl{7} & \qw & \qw & \qw & \qw & \qw & \targ{} & \ctrl{7} & \qw & \qw & \qw & \qw & \qw & \targ{} & \qw \\
&  &   & & & & & & \lstick{\ket{y_1}} & \qw & \ctrl{7} & \qw & \qw & \qw & \qw & \targ{} & \qw & \ctrl{7} & \qw & \qw & \qw & \qw & \targ{} & \qw \\
\lstick{\ket{y}} & \ctrl{7} & \qw  & {\Large =} & & & & & \lstick{\ket{y_2}} & \qw & \qw & \ctrl{7} & \qw & \qw & \qw & \targ{} & \qw & \qw & \ctrl{7} & \qw & \qw & \qw & \targ{} & \qw \\
&  &   & & & & & & \lstick{\ket{y_3}} & \qw & \qw & \qw & \ctrl{7} & \qw & \qw & \targ{} & \qw & \qw & \qw & \ctrl{7} & \qw & \qw & \targ{} & \qw \\
&  &   & & & & & & \lstick{\ket{y_4}} & \qw & \qw & \qw & \qw & \ctrl{7} & \qw & \targ{} & \qw & \qw & \qw & \qw & \ctrl{7} & \qw & \targ{} & \qw \\
&  &   & & & & & & \lstick{\ket{y_5}} & \qw & \qw & \qw & \qw & \qw & \ctrl{7} & \targ{} & \qw & \qw & \qw & \qw & \qw & \ctrl{7} & \targ{} & \qw \\
& & & & & & \\
&  &   & & & & & & \lstick{\ket{\lambda^0_0}} & \targ{} & \qw & \qw & \qw & \qw & \qw & \qw & \targ{}  & \qw & \qw & \qw & \qw & \qw & \qw & \qw \\
&  &  & & & & & & \lstick{\ket{\lambda^0_1}} & \qw & \targ{} & \qw & \qw & \qw & \qw & \qw & \qw & \targ{}  & \qw & \qw & \qw & \qw & \qw & \qw \\
\lstick{\ket{\lambda^0}} &  \gate{M} & \qw & & & & & & \lstick{\ket{\lambda^0_2}} & \qw & \qw & \targ{}  & \qw & \qw & \qw & \qw & \qw & \qw & \targ{}  & \qw & \qw & \qw & \qw & \qw \\
&  &   & & & & & & \lstick{\ket{\lambda^0_3}} & \qw & \qw & \qw & \targ{} & \qw & \qw & \qw & \qw & \qw & \qw & \targ{}  & \qw & \qw & \qw & \qw \\
&  &   & & & & & & \lstick{\ket{\lambda^0_4}} & \qw & \qw & \qw & \qw & \targ{} & \qw & \qw & \qw & \qw & \qw & \qw & \targ{}  & \qw & \qw & \qw \\
&  &   & & & & & & \lstick{\ket{\lambda^0_5}} & \qw & \qw & \qw & \qw & \qw & \targ{} & \qw & \qw & \qw & \qw & \qw & \qw & \targ{}  & \qw & \qw
\end{quantikz}\hspace{1cm}
\begin{quantikz}[row sep={0.5cm,between origins}]
&  &   & & &\lstick{\ket{\lambda^0_0}} & \ctrl{1} & \qw & \qw & \qw & \qw & \qw \\
&  &   & & &\lstick{\ket{\lambda^0_1}} & \ctrl{6} & \qw & \qw & \ctrl{1} & \qw & \qw \\
\lstick{\ket{\lambda^0}} & \octrl{7} & \qw  & & &\lstick{\ket{\lambda^0_2}} & \qw & \ctrl{1} & \qw & \ctrl{6} & \qw & \qw \\
&  &   & & &\lstick{\ket{\lambda^0_3}} & \qw & \ctrl{6} & \qw & \qw & \ctrl{1} & \qw \\
&  &   & & &\lstick{\ket{\lambda^0_4}} & \qw & \qw & \ctrl{1} & \qw & \ctrl{6} & \qw \\
&  &   & & &\lstick{\ket{\lambda^0_5}} & \qw & \qw & \ctrl{6} & \qw & \qw & \qw \\
&  &   & {\Large =} & && & & & & \\
&  &   & & &\lstick{\ket{\lambda^1_0}} & \targ{} & \qw & \qw & \qw & \qw & \qw \\
&  &   & & &\lstick{\ket{\lambda^1_1}} & \qw & \qw{} & \qw & \targ{} & \qw & \qw \\
\lstick{\ket{\lambda^1}} & \gate{\textsc{ext}_1} & \qw  & & &\lstick{\ket{\lambda^1_2}} & \qw & \targ{} & \qw & \qw & \qw & \qw \\
&  &   & & &\lstick{\ket{\lambda^1_3}} & \qw & \qw & \qw & \qw & \targ{} & \qw \\
&  &   & & &\lstick{\ket{\lambda^1_4}} & \qw & \qw & \targ{} & \qw & \qw & \qw \\
& & & & & \\
& & & & & \\
&  &   & & &\lstick{\ket{\lambda^1_0}} & \ctrl{2} & \qw & \qw & \qw \\
&  &   & & &\lstick{\ket{\lambda^1_1}} & \qw & \ctrl{2} & \qw & \qw \\
\lstick{\ket{\lambda^1}} & \octrl{5}  & \qw  & & &\lstick{\ket{\lambda^1_2}} & \ctrl{4} & \qw & \ctrl{2} & \qw \\
&  &   & & &\lstick{\ket{\lambda^1_3}} & \qw & \ctrl{4} & \qw & \qw \\
&  &   & & &\lstick{\ket{\lambda^1_4}} & \qw & \qw & \ctrl{4} & \qw \\
&  &   & {\Large =} & && & & &\\
&  &   & & &\lstick{\ket{\lambda^2_0}} & \targ{} & \qw & \qw & \qw \\
\lstick{\ket{\lambda^2}} & \gate{\textsc{ext}_2} &  \qw & & &\lstick{\ket{\lambda^2_1}} & \qw & \targ{} & \qw & \qw \\
&  &   & & &\lstick{\ket{\lambda^2_2}} & \qw & \qw & \targ{} & \qw 
\end{quantikz}
};
\end{tikzpicture}
\end{center}
\vspace{-0.5cm}
\caption{\label{fig:circuit_varphi}(On the left) The graphical representation of the \textsf{M} match operator over two input registers $x$ and $y$ of 8 qubits each. The resulting register, $\lambda^0$, of the same size, has the $j$-th qubit set to 1 if and only if $x_j=y_j$. (On the right) The graphical representation of two extension operators, used to compute the vector $\lambda^i$ from the vector $\lambda^{i-1}$. The $j$-th qubit of $\lambda^{i}$ is set to 1 if and only if the two qubits at positions $j$ and $j+i$ have both value $1$ in $\lambda^{i-1}$. On the right of each of the operators is shown its compact graphical form.}
\end{figure}
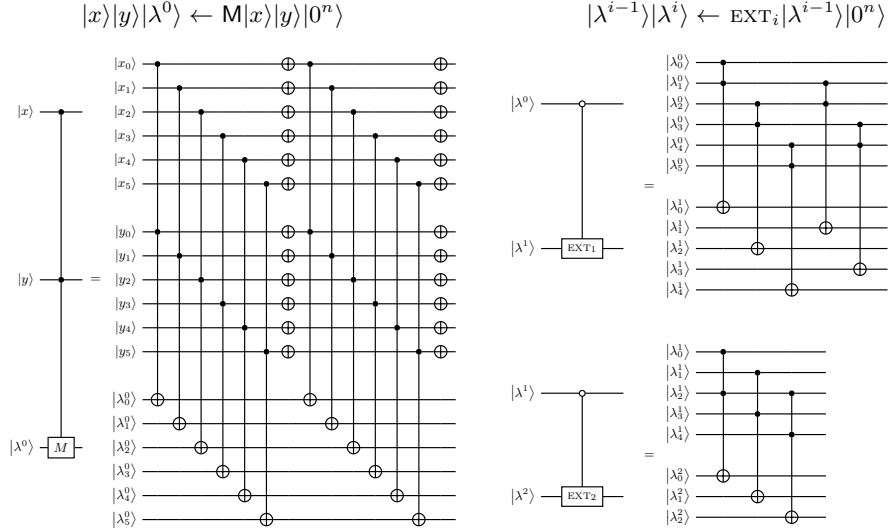

Fig.~\ref{fig:circuit_varphi} (on the right) shows the graphical representation of two extension operators, used to compute the vector $\lambda^i$ from the vector $\lambda^{i-1}$. The $j$-th qubit of $\lambda^{i}$ is set to 1 if and only if the two qubits at positions $j$ and $j+i$ have both value $1$ in $\lambda^{i-1}$. On the right of each of the operators is shown its compact graphical form.

\subsection{The Register Reversal Operators}\label{sec:reversal}

The register reversal operator, which we indicate by symbol $\doteqdot$, operates on a single register $|q\rangle$, of size $n$, by inverting the positions of its qubits. More formally, for all $q \in \{0,1\}^n$, the register reversal operator performs the following mapping

$$
	\doteqdot |q\rangle = |\bar{q}\rangle
$$

In terms of linear algebra such an operator can be obtained as follows

$$
	\displaystyle \doteqdot |q\rangle = \bigotimes_{i=0}^{\lfloor n/2 \rfloor} \Big( \textsc{Swap} |q_i\rangle |q_{n-i-1}\rangle \Big)
$$

Although such an operation requires the application of $\lfloor n/2 \rfloor$ swaps between pairs of qubits, these swap operations can be performed in parallel, requiring only constant time for the inversion of the entire register. 

Within our algorithm we make use of a register reversal operator in the initialization phase of the parameter $d$ that identifies the length of the sub-strings to be compared within $x$ and $y$. Specifically, we doctored $\bar{d}$ from the input parameter $d$ through the mapping $\doteqdot |d\rangle \rightarrow |\bar{d}\rangle$.

\subsection{The Controlled Bitwise Conjunction Operator}\label{sec:conjunction}

Finally, we present in this section the two circuits to implement the bitwise conjunction operator ($\wedge$) and the disjunction operator ($\vee$).

The bitwise conjunction operator performs bit-to-bit logical \textsc{and}  between two registers, $|a\rangle$ and $|b\rangle$, of $n$ qubits, depositing the result within a third register $|q\rangle$, whose size is still $n$. However, we are interested in a controlled version of such an operator, in which a fourth register $|c\rangle$, formed by a single qubit plays the role of control qubit. In other words, the bit-to-bit logical \textsc{and} is applied only in the case where the control qubit contains the value $0$. Nothing will be done otherwise.

Formally, the controlled bitwise conjunction operator performs, for all $a,b\in \{0,1\}^n$, the following mapping

$$
	\displaystyle \wedge |c\rangle |a\rangle |b\rangle |0^n\rangle = |c\rangle |a\rangle |b\rangle  \bigotimes_{i=0}^{n-1} | c_i \wedge a_i \wedge b_i \rangle
$$

Such an operator can be implemented through the application of $n$ gate fanouts, each of which consists of 3 controls, $|c\rangle$, $|a_i\rangle$, and $|b_i\rangle$, and has the register $|q_i\rangle$ as the target, for $0\leq i < n$.
Although these $n$ fanout operators have a common control qubit ($|c\rangle$) that does not allow their parallel application, it is possible to apply a well-known technique that copies the value of $|c\rangle$ on $n$ ancillae qubits before applying the $n$ fanout operators in parallel.
This technique then allows us to construct a circuit for this operator with depth $\mathcal{O}(\log(n))$.

\begin{figure}[!t]
\begin{center}
\begin{tikzpicture}
\node[scale=0.55] {
\begin{quantikz}[row sep={0.5cm,between origins}]
\lstick{\ket{c}} & \ctrl{4} & \qw  & & &\lstick{\ket{c}} & \ctrl{2} & \ctrl{3} & \ctrl{4} & \ctrl{5} & \ctrl{6} & \ctrl{7} & \qw \\
&  &   &  & & & & & & & \\

&  &   & & &\lstick{\ket{a_0}} & \ctrl{7} & \qw & \qw & \qw & \qw & \qw & \qw \\
&  &   & & &\lstick{\ket{a_1}} & \qw & \ctrl{7} & \qw & \qw & \qw & \qw & \qw \\
\lstick{\ket{a}} & \octrl{7} & \qw  & & &\lstick{\ket{a_2}} & \qw & \qw & \ctrl{7} & \qw & \qw & \qw & \qw \\
&  &   & & &\lstick{\ket{a_3}} & \qw & \qw & \qw & \ctrl{7} & \qw & \qw & \qw \\
&  &   &  & &\lstick{\ket{a_4}} & \qw & \qw & \qw & \qw & \ctrl{7} & \qw & \qw \\
&  &   & & &\lstick{\ket{a_5}} & \qw & \qw & \qw & \qw & \qw & \ctrl{7} & \qw \\
&  &   & {\Large =} & & & & & & & \\

&  &   & & &\lstick{\ket{b_0}} & \ctrl{7} & \qw & \qw & \qw & \qw & \qw & \qw \\
&  &   & & &\lstick{\ket{b_1}} & \qw & \ctrl{7} & \qw & \qw & \qw & \qw & \qw \\
\lstick{\ket{b}} & \octrl{7} & \qw  & & &\lstick{\ket{b_2}} & \qw & \qw & \ctrl{7} & \qw & \qw & \qw & \qw \\
&  &   & & &\lstick{\ket{b_3}} & \qw & \qw & \qw & \ctrl{7} & \qw & \qw & \qw \\
&  &   &  & &\lstick{\ket{b_4}} & \qw & \qw & \qw & \qw & \ctrl{7} & \qw & \qw \\
&  &   & & &\lstick{\ket{b_5}} & \qw & \qw & \qw & \qw & \qw & \ctrl{7} & \qw \\

&  &   &  & & & & & & & \\
&  &   & & &\lstick{\ket{q_0}} & \targ{} & \qw & \qw & \qw & \qw & \qw & \qw \\
&  &   & & &\lstick{\ket{q_1}} & \qw & \targ{} & \qw & \qw & \qw & \qw & \qw \\
\lstick{\ket{q}} & \gate{\wedge} & \qw  & & &\lstick{\ket{q_2}} & \qw & \qw & \targ{} & \qw & \qw & \qw & \qw \\
&  &   & & &\lstick{\ket{q_3}} & \qw & \qw & \qw & \targ{} & \qw & \qw & \qw \\
&  &   &  & &\lstick{\ket{q_4}} & \qw & \qw & \qw & \qw & \targ{} & \qw & \qw \\
&  &   & & &\lstick{\ket{q_5}} & \qw & \qw & \qw & \qw & \qw & \targ{} & \qw \\
\end{quantikz}~~~~~
\begin{quantikz}[row sep={0.5cm,between origins}]
\lstick{\ket{c}} & \octrl{4} & \qw  & & &\lstick{\ket{c}} & \targ{} & \ctrl{2} & \ctrl{3} & \ctrl{4} & \ctrl{5} & \ctrl{6} & \ctrl{7} & \targ{} & \qw \\
&  &   &  & & & & & & & \\

&  &   & & &\lstick{\ket{a_0}} & \qw & \ctrl{7} & \qw & \qw & \qw & \qw & \qw & \qw & \qw \\
&  &   & & &\lstick{\ket{a_1}} & \qw & \qw & \ctrl{7} & \qw & \qw & \qw & \qw & \qw & \qw \\
\lstick{\ket{a}} & \ctrl{7} & \qw  & & &\lstick{\ket{a_2}} & \qw & \qw & \qw & \ctrl{7} & \qw & \qw & \qw & \qw & \qw \\
&  &   & & &\lstick{\ket{a_3}} & \qw & \qw & \qw & \qw & \ctrl{7} & \qw & \qw & \qw & \qw \\
&  &   &  & &\lstick{\ket{a_4}} & \qw & \qw & \qw & \qw & \qw & \ctrl{7} & \qw & \qw & \qw \\
&  &   & & &\lstick{\ket{a_5}} & \qw & \qw & \qw & \qw & \qw & \qw & \ctrl{7} & \qw & \qw \\
&  &   &  {\Large =} & & & & & & & \\

&  &   & & &\lstick{\ket{b_0}} & \qw & \targ{} & \qw & \qw & \qw & \qw & \qw & \qw & \qw \\
&  &   & & &\lstick{\ket{b_1}} & \qw & \qw & \targ{} & \qw & \qw & \qw & \qw & \qw & \qw \\
\lstick{\ket{b}} & \gate{\textsc{c}} & \qw  & & &\lstick{\ket{b_2}} & \qw & \qw & \qw & \targ{} & \qw & \qw & \qw & \qw & \qw \\
&  &   & & &\lstick{\ket{b_3}} & \qw & \qw & \qw & \qw & \targ{} & \qw & \qw & \qw & \qw \\
&  &   & & &\lstick{\ket{b_4}} & \qw & \qw & \qw & \qw & \qw & \targ{} & \qw & \qw & \qw \\
&  &   & & &\lstick{\ket{b_5}} & \qw & \qw & \qw & \qw & \qw & \qw & \targ{} & \qw & \qw 
\end{quantikz}
};
\end{tikzpicture}
\end{center}
\vspace{-0.5cm}
\caption{\label{fig:circuit_copy}The graphical representation of the copy operator with reversal control over one input registers $|a\rangle$ of 6 qubits. The resulting qubit, $|b\rangle$, is a copy of $|a\rangle$ only if the control qubit $|c\rangle$ is set to $0$. On the right of each of the operator is shown its compact graphical form.}
\end{figure}
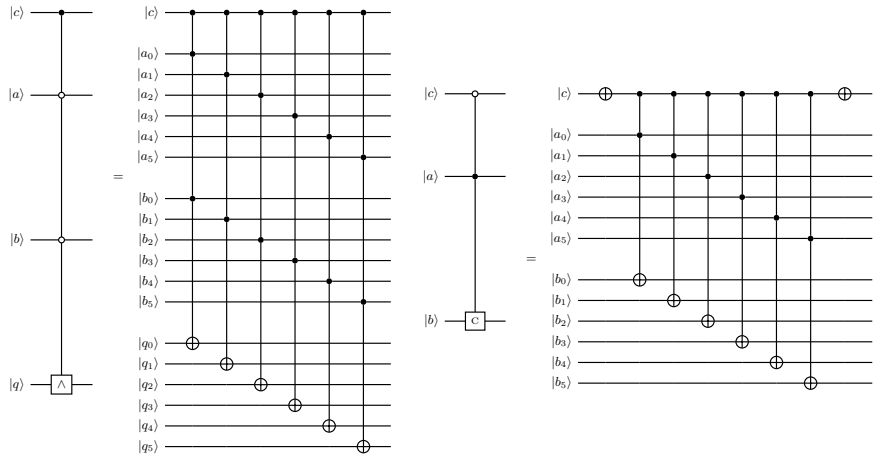

\subsection{The Copy Operator with Reversal Control}\label{sec:disjunction}
The copy operator with reversal control (or simply CRC operator), is a simple copy operator, whose task is to transfer the value of the $n$ qubits of a first $|a\rangle$ register into the $n$ qubits of a second $|b\rangle$ register, initialized to $|0^n\rangle$. In addition we envisage the presence of an inverse control $|c\rangle$ qubit for the application of this operation. In this context, an inverse control allows the application of the operation when the value of the control qubit is $0$, while inhibiting its application when the value of the qubit is 1.

Formally, the CRC operator performs, for all $a\in \{0,1\}^n$ and $c\in \{0,1\}$,  the following mapping

$$
	\displaystyle \textrm{C} |c\rangle |a\rangle |0^n\rangle = |c\rangle |a\rangle  \bigotimes_{i=0}^{n-1} \left\vert (\neg c_i) \wedge a_i \right\rangle
$$

The implementation of the CRC operator goes through the use of $n$ Toffoli gates, where the $i$-th Toffoli gate, for $0\leq i <n$, uses $|c\rangle$ and $|a_i\rangle$ as controls and $|b_i\rangle$ as targets. 
Although these $n$ Toffoli gates have a common control qubit ($|c\rangle$) that does not allow their parallel application, it is possible to apply a well-known technique that copies the value of $|c\rangle$ on $n$ ancillae qubits before applying the $n$ fanout operators in parallel.
This technique then allows us to construct a circuit for this operator with depth $\mathcal{O}(\log(n))$.

\subsection{The Register Disjunction Operator}\label{sec:disjunction}
The register disjunction operator, or simply disjunction operator, computes the logic \textsc{or} operation between the $n$ qubits of the input register $|a\rangle$ and deposits the result of such logic operation inside the output qubit, $|r\rangle$.

Formally, the register disjunction operator performs, for all $a\in \{0,1\}^n$, the following mapping

$$
	\displaystyle \vee |a\rangle |0\rangle = |a\rangle \left\vert \bigvee_{i=0}^{n-1} a_i \right\rangle
$$

The implementation of the disjunction operator can be realized by a fanout operator on the $|a\rangle$ registers, and with target $|r\rangle$, squeezed by two batteries of X operators on the $|a\rangle$ qubits. The depth of the resulting circuit is thus equal to $\mathcal{O}(\log(n))$.

\begin{figure}[!t]
\begin{center}
\begin{tikzpicture}
\node[scale=0.6] {
\begin{quantikz}[row sep={0.5cm,between origins}]
&  &   & & &\lstick{\ket{a_0}} & \targ{x} & \ctrl{1} & \targ{x} & \qw \\
&  &   & & &\lstick{\ket{a_1}} & \targ{x} & \ctrl{1} & \targ{x} & \qw \\
\lstick{\ket{a}} & \ctrl{5} & \qw  & & &\lstick{\ket{a_2}} & \targ{x} & \ctrl{1} & \targ{x} & \qw \\
&  &   & & &\lstick{\ket{a_3}} & \targ{x} & \ctrl{1} & \targ{x} & \qw \\
&  &   & {\Large =} & &\lstick{\ket{a_4}} & \targ{x} & \ctrl{1} & \targ{x} & \qw \\
&  &   & & &\lstick{\ket{a_5}} & \targ{x} & \ctrl{2} & \targ{x} & \qw \\
&  &   &  & & & & & & & \\
\lstick{\ket{r}} & \gate{\vee} & \qw  & & &\lstick{\ket{r}} & \qw & \targ{} & \targ{} & \qw  
\end{quantikz}
};
\end{tikzpicture}
\end{center}
\vspace{-0.5cm}
\caption{\label{fig:circuit_or}The graphical representation of the register disjunction operator over two input registers $|a\rangle$ of 6 qubits each. The resulting qubit, $|r\rangle$, is set to $1$ if at least one of the qubits of $|a\rangle$ is worth 1. On the right of each of the operator is shown its compact graphical form.}
\end{figure}
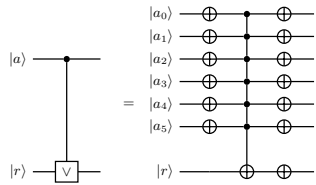

\subsection{Quantum Circuit Definition for FSM Problems}

We are now ready to define quantum circuits for FSM problem solving. As anticipated in the previous sections, the circuit that solves the three proposed problems is the same: what changes is the way the input registers are initialized.

Specifically, we will show how the three problems can be solved through a quantum circuit with depth equal to $\mathcal{O}(\log^3(n))$. 
The complete circuit we describe in this section is depicted in Fig.\ref{fig:circuit_bms}.

The circuit make use of two registers $|x\rangle$ and $|y\rangle$, both of size $n$ which we assume to contain the characters of the two input strings $x$ and $y$, respectively. We assume also that these registers are stored in a quantum memory and do not need initialization.
The register containing the value of the fixed length $d$ is stored in a $|d\rangle$ register of size $\log(d)$. Initialization of that register would take time $\mathcal{O}(\log(n))$. For simplicity, in Fig.\ref{fig:circuit_bms} we assume a value of $d$, with $1\leq d <8$, which can represented by $3$ bits.
The output of the computation is stored in the $|r\rangle$ register consisting of a single qubit.

Observe that the initialization of quantum registers also involves applying the register reversal operator to $|d\rangle$ so that it can hold the value $\bar{d}$. This operation is performed in constant time (see Section \ref{sec:reversal}).

In our quantum implementation, each vector $D^i$, with $0\leq i<\log(d)$, is maintained in a quantum register $|D^i\rangle$ of size $n+1$, while each vector $\lambda^i$, is maintained in a quantum register $|\lambda^i\rangle$ of size $n$.

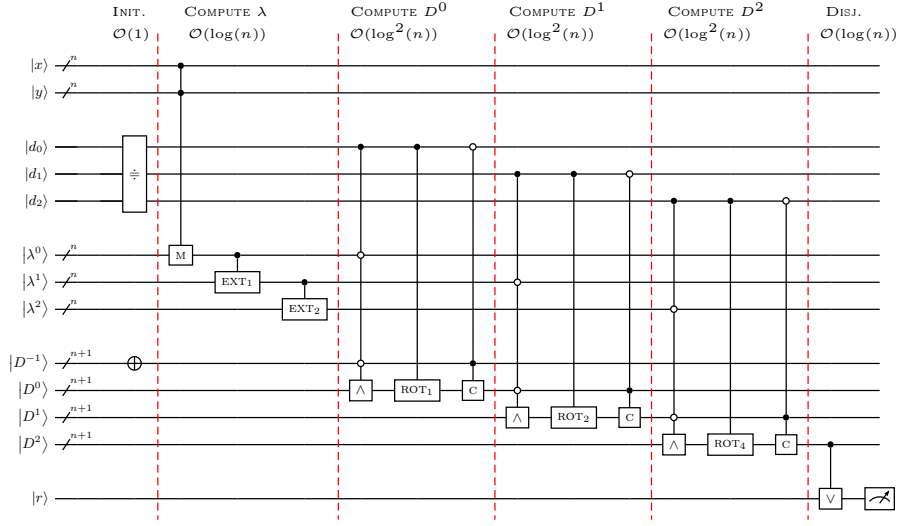
\begin{figure}[!t]
{\color{white}.}{\tiny \hspace{1.5cm}\textsc{Init.}\hspace{0.5cm}\textsc{Compute $\lambda$}\hspace{1.1cm}\textsc{Compute $D^0$}\hspace{0.8cm}\textsc{Compute $D^1$}\hspace{0.8cm}\textsc{Compute $D^2$}\hspace{0.8cm}\textsc{Disj.}}\\[-0.1cm]
{\color{white}.}{\tiny \hspace{1.5cm}$\mathcal{O}(1)$\hspace{0.5cm}$\mathcal{O}(\log(n))$\hspace{1.1cm}$\mathcal{O}(\log^2(n))$\hspace{0.9cm}$\mathcal{O}(\log^2(n))$\hspace{0.9cm}$\mathcal{O}(\log^2(n))$\hspace{0.9cm}$\mathcal{O}(\log(n))$}\\
\vspace{-1cm}
\begin{center}
\begin{tikzpicture}
\node[scale=0.60] {
\begin{quantikz}[row sep={0.6cm,between origins}]
& & & \slice{} & & & \slice{} & & & \slice{} & & & \slice{} & & & \slice{} \\

\lstick{\ket{x}} & \qwbundle{n} & \qw & \qw & \ctrl{1} & \qw & \qw 							& \qw & \qw & \qw & \qw & \qw & \qw & \qw & \qw & \qw & \qw & \qw \\

\lstick{\ket{y}} & \qwbundle{n} & \qw & \qw  & \ctrl{6} & \qw & \qw 							& \qw & \qw & \qw & \qw & \qw & \qw & \qw & \qw & \qw & \qw & \qw \\

& & & &&&&&&&&&&&&& \\

\lstick{\ket{d_0}} & \qw \qw  & \qw & \gate[3]{\doteqdot} & \qw & \qw & \qw 			& \ctrl{4} & \ctrl{9} & \octrl{8} & \qw & \qw & \qw & \qw & \qw & \qw & \qw & \qw \\			
\lstick{\ket{d_1}} & \qw \qw  & \qw & \qw & \qw & \qw & \qw			& \qw & \qw & \qw & \ctrl{4} & \ctrl{9} & \octrl{8} & \qw & \qw & \qw  & \qw & \qw \\			
\lstick{\ket{d_2}} & \qw \qw  & \qw & \qw & \qw & \qw & \qw			& \qw & \qw & \qw & \qw & \qw & \qw & \ctrl{4} & \ctrl{9} & \octrl{8} & \qw & \qw \\

& & & &&&&&&&&&&&&& \\

\lstick{\ket{\lambda^0}} & \qwbundle{n} & \qw & \qw  & \gate{\textsc{m}} & \ctrl{1} & \qw			& \octrl{4} & \qw & \qw & \qw & \qw  & \qw & \qw & \qw & \qw & \qw & \qw \\

\lstick{\ket{\lambda^1}} & \qwbundle{n} & \qw & \qw  & \qw & \gate{\textsc{ext}_1} & \ctrl{1}  		& \qw & \qw & \qw & \octrl{4} & \qw & \qw & \qw & \qw & \qw & \qw & \qw \\

\lstick{\ket{\lambda^2}} & \qwbundle{n} & \qw & \qw  & \qw & \qw & \gate{\textsc{ext}_2}		& \qw & \qw & \qw & \qw & \qw & \qw & \octrl{4} & \qw & \qw & \qw & \qw \\

& & & &&&&&&&&&&&&& \\

\lstick{\ket{D^{-1}}} & \qwbundle{n+1} & \qw & \targ{} & \qw & \qw & \qw 						& \octrl{1} & \qw & \ctrl{1} & \qw & \qw & \qw & \qw & \qw & \qw & \qw & \qw \\

\lstick{\ket{D^0}} & \qwbundle{n+1} & \qw & \qw & \qw & \qw & \qw	 						& \gate{\wedge} & \gate{\textsc{rot}_1} & \gate{\textsc{c}} & \octrl{1} & \qw & \ctrl{1} & \qw & \qw & \qw  & \qw & \qw \\

\lstick{\ket{D^1}} & \qwbundle{n+1} & \qw & \qw & \qw & \qw & \qw 						& \qw & \qw & \qw & \gate{\wedge} & \gate{\textsc{rot}_2} & \gate{\textsc{c}} & \octrl{1} & \qw & \ctrl{1} & \qw & \qw \\

\lstick{\ket{D^2}} & \qwbundle{n+1} & \qw & \qw & \qw & \qw & \qw 						& \qw & \qw & \qw & \qw & \qw & \qw & \gate{\wedge} & \gate{\textsc{rot}_4} & \gate{\textsc{c}} & \ctrl{2} & \qw  \\

& & & &&&&&&&&&&&&& \\

\lstick{\ket{r}} & \qw & \qw & \qw & \qw & \qw & \qw & \qw & \qw & \qw & \qw & \qw & \qw & \qw & \qw & \qw & \gate{\vee} & \meter{}
\end{quantikz}
};
\end{tikzpicture}
\end{center}
\vspace{-0.5cm}
\caption{\label{fig:circuit_bms}A simplified schema of the circuit used for the \textsc{FSM} algorithm. 
}
\end{figure}

The registers holding the matching substring vectors $\lambda^i$ are computed in the first phase of the algorithm through the application of the matching operator M (for the initialization of $\lambda^0$) and the extension operators EXT$_i$ for subsequent iterations. Since the application of these operators takes constant time (see Section \ref{sec:matching_vector}) the whole process can be executed in time $\mathcal{O}(\log(n))$.

The next stage of computation constructs the $D^i$ vectors, for $0\leq i \log(d)$, through $\log(d)$ iterative steps all of which have the same structure. At the $i$-th step, the bitwise conjunction operator between the $\lambda^i$ and $D^{i-1}$ registers, controlled by the $|d_i\rangle$ register, is applied in order to implement the logical relation defined in (\ref{eq:varphi}). Then a rightward rotation is applied to the operator $D^i$ of $2^i$ positions, still controlled by the register $|d_i\rangle$. In the event that the register $|d_i\rangle$ should contain the value $0$ then the application of the copy operator with inverse control, C, transfers the register $D^{i-1}$ to $D^i$. These three operators take time $\mathcal{O}(\log(n))$, $\mathcal{O}(\log^2(n))$ and $\mathcal{O}(\log(n))$, respectively. So the entire computation of registers $D^i$ can be done in time $\mathcal{O}(\log^3(n))$.

The last step in the circuit is to apply the register disjunction operator on $|D^{\log(d)}\rangle$, so as to implement the logical operation defined in Fig.\ref{fig:algorithm-bpm-pseudocode} (line 7).
We therefore conclude that the total complexity for executing the FPM circuit is $\mathcal{O}(\log^3(n))$. 

\begin{observation}
As stated in our Observation \ref{obs:binary}, in the course of this paper we assumed that the input strings, $x$ and $y$, were formed by characters taken from a binary alphabet. In the case where the two strings were instead composed of characters drawn from an alphabet $\Sigma$ of size $\sigma>2$, we can assume that each of the characters is representable by a number of bits equal to $\min(\log(\sigma),\log(n))$.
All the gates presented in this paper can thus be adapted to deal with nonbinary strings, incurring an overhead of complexity at most equal to $\mathcal{O}(\log(n))$. This allows us to conclude that the complexity of circuits solving FSM problems in the general case is equal to $\mathcal{O}(\log^4(n))$.
\end{observation}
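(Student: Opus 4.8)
The plan is to derive the final $\mathcal{O}(\log^4(n))$ bound for the non-binary case as a direct corollary of the binary-case analysis, so I would first pin down precisely why the circuit of Fig.\ref{fig:circuit_bms} has depth $\mathcal{O}(\log^3(n))$ on binary input. I would add up the depths phase by phase: the initialization (reversing $|d\rangle$ with $\doteqdot$) is constant; the computation of all registers $|\lambda^i\rangle$ via the matching operator M followed by the $\mathcal{O}(\log(n))$ extension operators $\mathrm{EXT}_i$ costs $\mathcal{O}(\log(n))$; and the construction of the $D^i$ registers runs through $\log(d)=\mathcal{O}(\log(n))$ structurally identical iterations. A single iteration is dominated by the controlled rotation $\mathrm{ROT}_{2^i}$, whose depth is $\mathcal{O}(\log(n))$ as an uncontrolled operator and $\mathcal{O}(\log^2(n))$ once the control qubit $|d_i\rangle$ is attached (Section~\ref{sec:circular_rotation}), since the companion operators $\wedge$ and C cost only $\mathcal{O}(\log(n))$ each. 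Multiplying the per-iteration cost $\mathcal{O}(\log^2(n))$ by the $\mathcal{O}(\log(n))$ iterations and adding the final disjunction $\vee$ of depth $\mathcal{O}(\log(n))$ recovers $\mathcal{O}(\log^3(n))$.

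For the non-binary case I would encode each character of $\Sigma$ on $b=\min(\log(\sigma),\log(n))=\mathcal{O}(\log(n))$ qubits, so that $|x\rangle$ and $|y\rangle$ become registers of $n\cdot b$ qubits, while, crucially, the vectors $\lambda^i$ and $D^i$ remain length-$n$ (respectively length-$(n+1)$) \emph{bit} registers, carrying one bit per text position. The key point is that correctness is untouched: Lemma~\ref{lem:bms} is alphabet-agnostic, so the recurrences for $\lambda^i$ and $D^i$ and the final disjunction still decide all three FSM problems verbatim (these differing only in the initialization of $|D^{-1}\rangle$), once $\lambda^0[j]$ is reinterpreted as the equality test between the two $b$-bit blocks encoding $x_j$ and $y_j$. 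Consequently the only operator that must be genuinely re-engineered is the matching operator M.

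I would then bound the adapted M: for each of the $n$ positions it must decide whether two $b$-bit encodings coincide, that is, a logical conjunction of $b$ bitwise equality tests; a balanced binary tree of gates computes each such conjunction in depth $\mathcal{O}(\log(b))=\mathcal{O}(\log\log(n))$, and all $n$ positions are handled in parallel. Conservatively this is an $\mathcal{O}(\log(n))$ overhead, and more generally every gate that reads the character registers inherits at most an $\mathcal{O}(\log(n))$ multiplicative factor, whereas the operators $\mathrm{EXT}_i$, $\wedge$, C, $\mathrm{ROT}$ and $\vee$ act on the $\lambda^i$/$D^i$ bit-vectors and are therefore unchanged. Applying this worst-case per-gate overhead uniformly to the $\mathcal{O}(\log^3(n))$ binary depth gives the stated $\mathcal{O}(\log^3(n))\cdot\mathcal{O}(\log(n))=\mathcal{O}(\log^4(n))$.

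The main obstacle is not the arithmetic but justifying the clean separation between the ``character-reading'' gates and the ``bit-vector'' gates, so that the blanket $\mathcal{O}(\log(n))$ overhead is applied only where it is actually needed. One must check that in the whole circuit of Fig.\ref{fig:circuit_bms} only M ever touches the wider encodings, which shows that the true bottleneck, the controlled rotation, remains independent of $\sigma$; making this explicit reveals that the $\mathcal{O}(\log^4(n))$ bound is safe but loose. The delicate point in a fully careful version would be to verify that the equality-test tree inside M can be interleaved with the parallelism already present in the matching phase without inflating it beyond $\mathcal{O}(\log(n))$, which is where I would spend most of the effort.
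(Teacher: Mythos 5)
Your proposal is correct and reaches the claimed bound, but it takes a sharper route than the paper, whose entire justification is the blanket argument of Observation~\ref{obs:binary}: once each character is encoded on $\min(\log(\sigma),\log(n))$ qubits, \emph{every} operator in the circuit is assumed to incur a multiplicative overhead of at most $\mathcal{O}(\log(n))$, and the $\mathcal{O}(\log^4(n))$ bound follows by multiplying this factor into the $\mathcal{O}(\log^3(n))$ binary-case depth (your recap of that depth --- constant initialization, $\mathcal{O}(\log(n))$ for the $\lambda^i$ phase, $\mathcal{O}(\log(n))$ iterations each dominated by the controlled $\textsc{rot}_{2^i}$ at $\mathcal{O}(\log^2(n))$, plus the final disjunction --- matches the paper's accounting exactly). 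What you do differently is to localize the overhead: you observe that $\lambda^i$ and $D^i$ remain one bit per text position regardless of $\sigma$, so $\textsc{ext}_i$, $\wedge$, \textsc{c}, $\textsc{rot}$ and $\vee$ are untouched, that Lemma~\ref{lem:bms} is alphabet-agnostic, and that only the matching operator \textsf{M} must be re-engineered, with per-position equality of two $b$-bit blocks computable in depth $\mathcal{O}(\log b)=\mathcal{O}(\log\log(n))$ in parallel over all positions; the uniform $\mathcal{O}(\log(n))$ factor is then applied only as a conservative envelope. The paper's argument buys brevity and requires no inspection of which gates actually read the character registers; yours buys precision and in fact establishes the stronger conclusion that the general-alphabet depth stays $\mathcal{O}(\log^3(n))$, i.e., the stated $\mathcal{O}(\log^4(n))$ is a valid but loose upper bound --- this is consistent with the statement as written (a big-$\mathcal{O}$ claim), so there is no gap, only an improvement worth making explicit.
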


\newpage

\bibliographystyle{plain}


\end{document}